\title{Adapting Real Quantifier Elimination Methods for Conflict Set
  Computation} \author{Maximilian Jaroschek\inst{1} \and Pablo Federico
  Dobal\inst{1,2,3} \and Pascal Fontaine\inst{3}\thanks{This work has been
    supported by the ANR/DFG project STU 483/2-1 SMArT, project ANR-13-IS02-0001
    of the Agence Nationale de la Recherche, by the European Union Seventh
    Framework Programme under grant agreement no. 295261 (MEALS), by the
    R\'egion Lorraine, and by the STIC AmSud MISMT} } \institute{ Max Planck
  Institute for Informatics, Saarbr\"ucken, Germany \and Universität des
  Saarlandes, Saarbr\"ucken, Germany \and INRIA, Universit\'e de Lorraine \&
  LORIA, Nancy, France }
\let\set\mathbbm
\begin{document}

\maketitle

\begin{abstract}
  The satisfiability problem in real closed fields is decidable.  In the context
  of satisfiability modulo theories, the problem restricted to conjunctive sets
  of literals, that is, sets of polynomial constraints, is of particular
  importance.  One of the central problems is the computation of good
  explanations of the unsatisfiability of such sets, i.e.\ obtaining a small
  subset of the input constraints whose conjunction is already unsatisfiable.
  We adapt two commonly used real quantifier elimination methods, cylindrical
  algebraic decomposition and virtual substitution, to provide such conflict
  sets and demonstrate the performance of our method in practice.

  \keywords{SMT, real quantifier elimination,
    cylindrical algebraic decomposition, virtual substitution,
    conflict set}
\end{abstract}

\section{Introduction}

Among the reasons for the current success of Satisfiability Modulo Theory (SMT,
we refer to~\cite{Barrett14} for more information) solvers is the ability to
handle large formulas in an expressive language.  Since arithmetic is pervasive
in applications of SMT, this language should include some kind of arithmetic
theory.  Linear arithmetic (on reals and integers) was one of the first theories
considered for SMT~\cite{Nelson3}, and integrated in practice into SMT
solvers~\cite{Barrett5,Detlefs2}.  Non-linear arithmetic is also mentioned in
the fundamental combination of theories paper~\cite{Nelson3}.  Although many
applications do require non-linear arithmetic reasoning --- our motivating
application was the verification of a clock synchronization
algorithm~\cite{Barsotti2} --- it is considered in practice only since quite
recently (e.g.~\cite{Jovanovic3}), and few solvers integrate non-linear
arithmetic reasoning capabilities.  Up to now, no technique is accepted as the
right way to integrate non-linear reasoning capabilities into SMT
solvers.

The theory of real closed fields (reals with order, addition, and
multiplication) has however been extensively studied in the area of symbolic
computation, and mature tools exist to handle sets of constraints in this
language, e.g.~\cite{dolzmann2,brown2}.  The results presented
here aim at adapting those tools so that they can be integrated into an SMT
framework.  Indeed, whereas developing dedicated techniques for non-linear
arithmetic within SMT is crucial, a lesson from linear arithmetic is that mature
(external) tools should also be adapted for cooperation with SMT solvers.  For
instance, a reasonably efficient linear programming tool suitably incorporated
into the SMT solver CVC4 provided an impressive improvement of efficiency
compared to the dedicated SMT techniques alone~\cite{King1}.

To integrate a theory reasoner in an SMT framework, some features are valuable
(see Section~1.4.1 in~\cite{Barrett14}).  Since we envision fast and incomplete
techniques tightly integrated within SMT, backed up by a complete and robust but
also heavy engine, it is not of foremost importance for this engine to be
incremental and backtrackable: it will only be called as a last resort on a full
assignment when the heuristic solver failed to show unsatisfiability.  However,
a critical feature is that the complete engine provides models, both for
feedback to the user but also for model-based combination with other
theories~\cite{Moura5,Caminha4}.  Adapting established real closed field
decision procedures to produce models has been the subject of a previous
work~\cite{Kosta1}.  The other critical feature is to be able, from an
unsatisfiable set of constraints, to extract a small conflict set.  Without this
ability, the cooperation of the SMT solver and the engine would most probably
fail because the SMT solver would enumerate an exponential number of slightly
different assignments, successively submitted to the engine.  The engine would
reject them one by one, but they would essentially be unsatisfiable for the same
reason.  With small conflict set production, all these assignments are
blocked by the strong conflict clause added within the SMT solver in just one
call to the external engine.

We here focus on the computation of small conflict sets from unsatisfiable sets
of non-linear constraints.  Two commonly used real quantifier elimination
methods, namely cylindrical algebraic decomposition and virtual substitution,
are considered.  They basically share a feature that provides the key to
efficiently compute conflict sets: a finite set of test points is generated in
the process.  These test points falsify some of the input constraints.  If the
tentative conflict set contains enough constraints so that at least one of them
is false for each test point, it is indeed a conflict set.

Section~\ref{sec:qe} briefly describes the two decision procedures for sets of
polynomial constraints on the reals, Section~\ref{sec:ic} presents the small
conflict set extraction method, and experimental results are discussed in
Section~\ref{sec:exp}.

\section{Real Quantifier Elimination}
\label{sec:qe}

Given a quantified formula $\phi$, quantifier elimination is the
process of finding an equivalent, quantifier-free formula
$\phi'$. Whether or not quantifier elimination is possible in theory
and practice in general depends on the considered formal system and
the underlying theory.

For first-order logic formulas over the reals it is well known that quantifier
elimination is possible.  This was first proven by Tarski in 1951~\cite{tarski},
but the first successful algorithmic approach to the problem was developed by
Collins in 1974~\cite{collins}.  To formally define the problem, consider a
quantifier-free first-order formula $\varphi(x_1,\dots,x_n,u_1,\dots,u_m)$ over
the reals in the variables $x_1,\dots,x_n$, $u_1,\dots,u_m$. Given the formula
\begin{equation*}
\phi\equiv Q_1 x_1,\dots Q_n x_n:\varphi(x_1,\dots,x_n,u_1,\dots,u_m),
\end{equation*}
with $Q_i\in\{\forall,\exists\}$ for $1\leq i\leq n$, the quantifier-elimination
problem consists in finding a quantifier-free first-order formula
$\phi'(u_1,\dots,u_m)$ such that $\phi'$ is logically equivalent to $\phi$.  It
was proven independently by Weispfenning~\cite{weispfenning2} and Davenport and
Heintz~\cite{davenport} that solving the quantifier elimination problem over
real closed fields can require double exponential space.

Subsequently we describe two widely used real quantifier elimination
methods. Both approaches are based on the same general idea which we
discuss first before going into details about the specifics for
each method. Our goal is to give a comprehensible and intuitive
introduction to these procedures and not to describe them in thorough
technical detail.  References to more in depth treatments of the
subjects are given for the interested reader.

While these methods work in a general context, our focus lies on input
formulas found in the SMT setting with only existential quantifiers
and no free variables:
\begin{equation}
\label{eq:form2}
\phi\equiv \exists x_1,\dots \exists x_n:\varphi(x_1,\dots,x_n),
\end{equation}
It is clear that then either
$\texttt{true}$ or $\texttt{false}$ is a quantifier-free equivalent of
$\phi$. Over the reals, quantifier-free formulas are Boolean
combinations of polynomial expressions of the form
$p(x_1,\dots,x_n)\bowtie 0$ where $p$ is a polynomial in
$\set R[x_1,\dots,x_n]$ and $\bowtie$ is a relation symbol in
$\{<,\leq,=,\neq,>,\geq\}$. Given a point $(a_1,\dots,a_n)\in\set
R^n$, we can see if $\varphi$ holds for this point by substituting
$a_i$ for $x_i$ for all $1\leq i\leq n$. If we were able to perform
the substitution for all points in $\set R^n$ in finite time, we could
easily see if $\phi$ holds or not.

The approach of the two quantifier elimination methods
\textit{cylindrical algebraic decomposition} (CAD) and \textit{virtual
  substitution} (VS) is to reduce the set of infinitely many points in
$\set R^n$ to a finite set of test points, i.e. to find a finite
subset~$T$ of $\set R^n$ such that $\phi$ holds over $\set R^n$ if and only
if it holds over $T$.

\subsection{Cylindrical Algebraic Decomposition}

Cylindrical algebraic decomposition~\cite{collins} is the most
widely used real quantifier elimination method to date.  It is based on a simple
observation: given a finite, non-empty set~$P$ of polynomials in $n$ variables,
one can define an equivalence relation on $\set R^n$ that decomposes the space
into finitely many connected cells such that all the given polynomials are sign
invariant in each cell.

\begin{definition}
  Let $P$ be a non-empty set of polynomials in $\set R[x_1,\dots,x_n]$. For 
  $a,b\in\set R^n$ we say that $a$ is equivalent to $b$ if there exists a path
  $\gamma:[0,1]\rightarrow \set R^n$ from $a$ to $b$ such that for all
  $s,t\in[0,1]$ and all $p\in P$ we have that
  \[\operatorname{sgn}(p(\gamma(s))) =  \operatorname{sgn}(p(\gamma(t))).\]
  The term \emph{cell} refers to the preimage of an equivalence class
  under the canonical homomorphism which maps a point to its
  equivalence class. We call the set of all cells an (algebraic)
  decomposition of $\set R^n$.
\end{definition}

\begin{example}
\label{ex:cad:1}
  To illustrate the basic idea, we consider the bivariate case, and the
  following set of polynomials.
  \[P=\{\underbrace{x^2+y^2-1}_{p_1},\underbrace{x^2-y+1/2}_{p_2}\}\]
  The first polynomial defines three connected, sign invariant cells in
  $\set R^2$ given by
  \[\{(a,b)\in\set R^2\mid p_1<0\},\{(a,b)\in\set R^2\mid
  p_1=0\},\{(a,b)\in\set R^2\mid p_1>0\},\] and similarly, $p_2$ also
  decomposes $\set R^2$ into three cells when not taking $p_1$ into
  account. The combination of the cells induced by $p_1$ and the cells
  induced by $p_2$ gives rise to a new decomposition where the
  original cells either persist, collapse into common cells or form
  new cells via intersection. The decomposition of $\set R^2$ induced
  by $P$ consists of 5 different cells in total, as illustrated
  in Figure~\ref{fig:cad}.
\end{example}

\begin{figure}
\begin{center}
\begin{tikzpicture}[scale=1.3,line cap=round,line join=round,x=1.0cm,y=1.0cm]
\draw[->,color=black] (-2.92,0) -- (3.09,0);
\foreach \x in {-2.5,-2,-1.5,-1,-0.5,0.5,1,1.5,2,2.5,3}
\draw[shift={(\x,0)},color=black] (0pt,2pt) -- (0pt,-2pt);
\draw[->,color=black] (0,-1.49) -- (0,3.22);
\foreach \y in {-1,-0.5,0.5,1,1.5,2,2.5,3}
\draw[shift={(0,\y)},color=black] (2pt,0pt) -- (-2pt,0pt);
\clip(-2.92,-1.49) rectangle (3.09,3.22);
\draw [line width=1pt] (0,0) circle (1cm);
\clip(-2.41,-1.76) rectangle (2.33,2.79);
\draw [line width=1pt] (0,0) circle (1cm);
\draw [samples=100,xshift=0cm,yshift=0.5cm,line width=1pt] plot[domain=0:3]
(\x,\x^2);
\draw [samples=100,xshift=0cm,yshift=0.5cm,line width=1pt] plot[domain=-3:0] (\x,-\x^2);
\draw (-1.88,1.26) node[anchor=north west] {1};
\draw (-0.4,0.3) node[anchor=north west] {2};
\draw (-0.19,0.94) node[anchor=north west] {3};
\draw (0.34,2.46) node[anchor=north west] {4};
\draw (1.37,1.69) node[anchor=north west] {5};
\draw [dotted, line width=1.4pt] (1.11,1.73)-- (1.41,1.58);
\end{tikzpicture}
\end{center}
\caption{\label{fig:cad}The sign invariant cells of Example~\ref{ex:cad:1}. Note
  that cell no. 5 is given by the union of the varieties of $p_1$ and $p_2$.}
\end{figure}
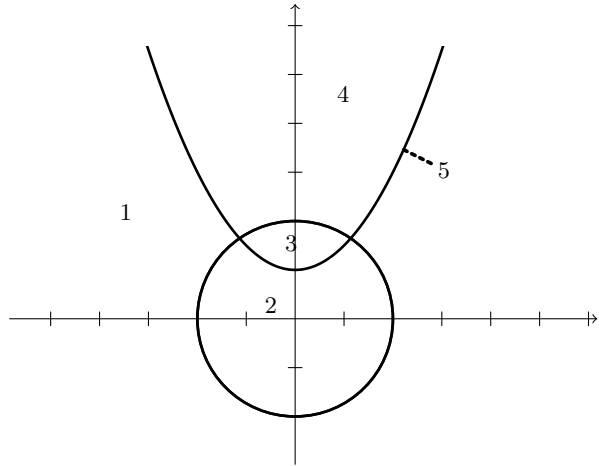

To study a quantified formula $\phi$, we want to  collect in a set
$P$ all the polynomial expressions in $\phi$ and then compute a sample point for
each cell in the decomposition induced by $P$.  While it seems easy to identify
the different sign invariant cells simply by inspection of the plot of the
varieties in Figure~\ref{fig:cad}, it is a non-trivial task for a computer and
for more involved polynomial systems (in more than two variables).

To facilitate the algorithmic identification of different cells, new polynomials
are added to $P$ so that the decomposition becomes cylindrical in the following
sense:

\begin{definition}
  A decomposition of\/ $\set R^n$ is called cylindrical if $n=1$ or if there
  exists a projection $\pi: \set R^n\rightarrow\set R^{n-1}$ that acts on the
  elements of\/ $\set R^n$ by removing one of
  their coordinates such that the following two conditions hold:
  \begin{enumerate}
  \item For two cells $C_1,C_2\subset\set R^n$, either $\pi(C_1)=\pi(C_2)$
    or $\pi(C_1)\cap\pi(C_2)=\emptyset$.
  \item The decomposition of $\set R^{n-1}$ induced by the images
    under $\pi$ of the cells in the decomposition of $\set R^{n}$ is
    cylindrical.
  \end{enumerate}
  We call a set of polynomials $P\subset R[x_1,\dots,x_n]$ cylindrical
  if the decomposition of\/ $\set R^n$ induced by $P$ is cylindrical.
\end{definition}
Again, this can easily be illustrated by an example.

\begin{example} (Example~\ref{ex:cad:1} continued.)
\label{ex:cad:2}
  The decomposition induced by $P$ as in Example~\ref{ex:cad:1} is not
  cylindrical. We can, however, refine it by adding four linear
  polynomials to the set. Let $c = \sqrt{0.5(\sqrt{7}-2)}$ ($c$ is
  such that $p_1(\pm c)=p_2(\pm c)$) and set
  \[P'=P\cup\{x+1,x+c,x-c,x-1\}.\] $P'$ is cylindrical and the
  decomposition is illustrated in Figure~\ref{fig:cad2}. It consists of 47
  different cells.

  Starting from a set of sample points from each cell in the induced
  decomposition of $\set R$ (represented by the dots on the horizontal axis in
  the figure), we can easily find all cells in $\set R^2$ ``above'' a fixed cell
  in $\set R$ by keeping the $x_1$ value fixed and looking for roots of any
  polynomial in $P$ with that $x_1$ value. In the picture, this corresponds to
  moving along the dotted line and looking for sign changes.
  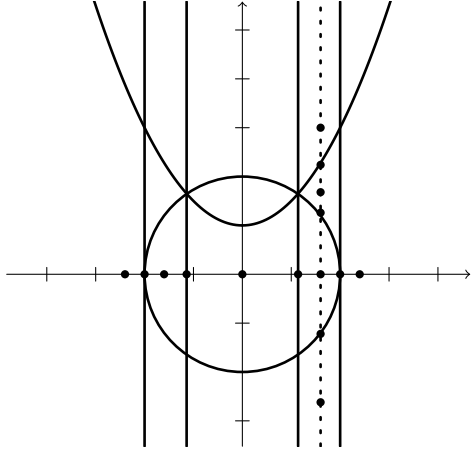
\begin{figure}[ht]
  \begin{center}
\begin{tikzpicture}[scale=1.3,line cap=round,line join=round,x=1.0cm,y=1.0cm]
\draw[->,color=black] (-2.41,0) -- (2.33,0);
\foreach \x in {-2,-1.5,-1,-0.5,0.5,1,1.5,2}
\draw[shift={(\x,0)},color=black] (0pt,2pt) -- (0pt,-2pt);
\draw[->,color=black] (0,-1.76) -- (0,2.79);
\foreach \y in {-1.5,-1,-0.5,0.5,1,1.5,2,2.5}
\draw[shift={(0,\y)},color=black] (2pt,0pt) -- (-2pt,0pt);
\clip(-2.41,-1.76) rectangle (2.33,2.79);
\draw [line width=1pt] (0,0) circle (1cm);
\draw [samples=100,xshift=0cm,yshift=0.5cm,line width=1pt] plot[domain=0:3]
(\x,\x^2);
\draw [samples=100,xshift=0cm,yshift=0.5cm,line width=1pt] plot[domain=-3:0] (\x,-\x^2);
\draw [line width=1pt] (1,-1.76) -- (1,2.79);
\draw [line width=1pt] (-1,-1.76) -- (-1,2.79);
\draw [line width=1pt] (0.57,-1.76) -- (0.57,2.79);
\draw [line width=1pt] (-0.57,-1.76) -- (-0.57,2.79);
\draw [loosely dotted,line width=1pt, ] (0.8,-1.76) -- (0.8,2.79);
\begin{scriptsize}
\fill  (0,0) circle (1.2pt);
\fill  (0.57,0) circle (1.2pt);
\fill  (0.8,0) circle (1.2pt);
\fill (1,0) circle (1.2pt);
\fill  (1.2,0) circle (1.2pt);
\fill  (0.8,0.63) circle (1.2pt);
\fill  (0.8,0.84) circle (1.2pt);
\fill  (0.8,1.12) circle (1.2pt);
\fill  (0.8,1.5) circle (1.2pt);
\fill  (0.8,-0.61) circle (1.2pt);
\fill  (0.8,-1.31) circle (1.2pt);

\fill  (-0.57,0) circle (1.2pt);
\fill  (-0.8,0) circle (1.2pt);
\fill (-1,0) circle (1.2pt);
\fill  (-1.2,0) circle (1.2pt);
\end{scriptsize}
\end{tikzpicture}
  \end{center}
  \caption{\label{fig:cad2}A cylindrical algebraic
      decomposition of $\set R^2$ induced by the polynomials in
      Example~\ref{ex:cad:2}.}
  \end{figure}
\end{example}

The full CAD algorithm works in three major steps. We start with a formula
$\phi$ of the form~\eqref{eq:form2} and collect the contained polynomials in a
set $P_n\subset\set R[x_1,\dots,x_n]$. The first step, the projection phase,
recursively adds new elements to $P_n$ such that its induced decomposition
becomes cylindrical. We denote this superset of $P_n$ by
$\operatorname{cadp}(P_n)$. If $n=1$, then $P_1$ is always cylindrical, so
$\operatorname{cadp}(P_1) := P_1$. For $n>1$, we compute a set $P_{n-1}$ which
contains all polynomials in $Q_n:=P_n\cap \set R[x_1,\dots,x_{n-1}]$ as well as
the image $P_n\setminus Q_n$ under a so called projection operator and return
$\operatorname{cadp}(P_n):=P_n\cup\operatorname{cadp}(P_{n-1})$. The projection
operator is a map such that $\operatorname{cadp}(P_n)$ is cylindrical if
$\operatorname{cadp}(P_{n-1})$ is. Intuitively it adds polynomials in
$\set R[x_1,\dots,x_{n-1}]$ to $P_{n-1}$ that correspond to asymptotes
orthogonal to the projection direction, intersections and self intersections of
the algebraic curves defined by the polynomials in $P_n\setminus Q_n$. In
Example~\ref{ex:cad:2}, $x\pm 1$ corresponds to the vertical asymptotes of the
algebraic curve given by $p_1$ and $x\pm c$ corresponds to the intersection of
the two curves given by $p_1$ and $p_2$.

In the second step, the extension phase,
sample points of the cells in the decomposition of $\set R$ induced by
$P_1$ are obtained by computing the roots of the polynomials in $P_1$
and points from the intervals between these roots. The cells of $\set
R$ are extended to cells of $\set R^2$ by keeping the $x_1$ values of
the sample points fixed and computing the roots of the polynomials in
$P_2$ regarded as univariate polynomials in $x_2$. This step is
iterated to obtain the cells in $\set R^3$, $\set R^4$ etc. In the last
step, the sample points of the cells in $\set R^n$ are plugged into
the the polynomials in $P$ and $\phi$ is evaluated.

It was shown by Brown and Davenport~\cite{brown} that the complexity of CAD is
double exponential in the number of variables. Many improvements of the base
algorithm like the ones found in~\cite{hong,marekbrown,chen}, however, allow
for solving moderately sized systems via CAD.

\subsection{Virtual Substitution}

The virtual substitution technique takes a more symbolic view on
the roots of a polynomial. It was introduced by Weispfenning in 1988,
see~\cite{weispfenning}, and several improvements and generalizations have been
developed since. It is not as prevalent as CAD due to its current degree
limitations in practice, but usually performs much better in terms of computing
time.

To get a good understanding of VS, consider first univariate polynomials and a
special form of the quantifier-free formula $\phi$ that contains no strict
inequalities but only Boolean combinations of expressions of the form
$p(x)\bowtie 0$ with $\bowtie\ \in\{\leq,=,\geq\}$. Similarly to CAD, VS
decomposes the space into connected cells.  However, while CAD does not really
exploit the literals but only the polynomials appearing in them, the cells in VS
are constructed such that the truth value of~$\phi$ (rather than the signs of
the images of the polynomials) remains invariant in each cell.

Let $p_1,p_2\in\set R[x]$ and $\phi= p_1\geq 0 \wedge p_2\geq 0$. The
real roots $r_1,\dots,r_k$ of $p_1$ given in ascending order decompose
$\set R$ into finitely many intervals
\[(-\infty,r_1],(r_1,r_2],\dots,(r_{k-1},r_k],(r_k,+\infty).\] The real roots of
$p_2$ then refine this decomposition such that in each interval, the truth
values of the inequalities and equations in $\phi$ do not change within an
interval. 

\begin{example}
\label{ex:vs:1}
Let $p_1= 10^{-1}(x + 5)(x + 2)(x - 6)$ and $p_2=x^2-9$ and $\Phi=\exists
x:p_1\geq 0\wedge p_2\leq 0$. Then the truth invariant decomposition induced by
the real roots of $p_1$ and $p_2$ consists of the intervals
\[(-\infty,-5],(-5,-3],(-3,-2],(-2,3],(3,6],(6,+\infty).\] By plugging in the
upper interval bounds (and evaluating the polynomials at~$+\infty$), we see that
$\phi\equiv \mathtt{true}$ via the test point $x=-3$.
\begin{figure}
  \begin{center}
    \begin{tikzpicture}[scale=0.4,line cap=round,line join=round,x=1.0cm,
      y=0.5244971462874863cm]
  \draw[->,color=black] (-7.39,0) -- (9.42,0); \foreach \x in {-6,-4,-2,2,4,6,8}
  \draw[shift={(\x,0)},color=black] (0pt,2pt) -- (0pt,-2pt);
  \draw[->,color=black] (0,-12.62) -- (0,6.45); \foreach \y in
  {-12,-10,-8,-6,-4,-2,2,4,6} \draw[shift={(0,\y)},color=black] (2pt,0pt) --
  (-2pt,0pt); \clip(-7.39,-12.62) rectangle (9.42,6.45); \draw[line width=1.1pt,
  smooth,samples=100,domain=-7.392955705201862:9.424128403621054]
  plot(\x,{1/10*((\x)+5)*((\x)+2)*((\x)-6)}); \draw[line width=1.1pt,
  smooth,samples=100,domain=-7.392955705201862:9.424128403621054]
  plot(\x,{(\x)^2-10});
\end{tikzpicture}

  \end{center}

  \caption{\label{fig:vs}Plot of the polynomials in Example~\ref{ex:vs:1}.}
\end{figure}
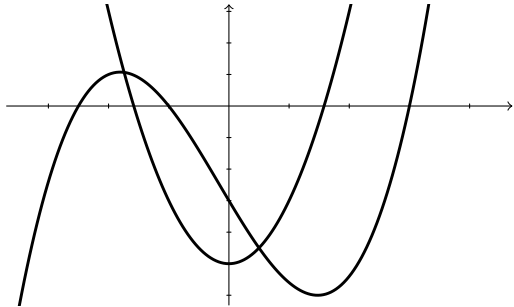
\end{example}
When dealing with multivariate polynomials in $\set R[x_1,\dots,x_n]$, the idea
is to choose one variable $x_i$ and view the polynomials as univariate in
$x_i$. Then we are in the univariate setting where we can (symbolically) compute
the interval decomposition. Here, the interval bounds are not real numbers but
expressions in the variables $x_1,\dots,x_{i-1},x_{i+1},\dots,x_n$.

\begin{example}
  Let $p_1=x_1x_2-1$ and $p_2=x_1-3$ and $\phi=\exists x_1\exists x_2: p_1\geq
  0\wedge p_2\leq 0$. As univariate polynomials in $\set R(x_1)[x_2]$, $p_2$
  either vanishes identically or has no roots. The polynomial $p_1$ has either
  no roots or a root at $x_1^{-1}$. We substitute this root expression for $x_2$
  and get
  \[p_1(x_1^{-1}/x_2) = x_1x_1^{-1}-1 = 0,\quad
  p_2(x_1^{-1}/x_2)=x_1-3.\] This substitution is only possible if we
  require that $x_1\neq 0$. Therefore, after the substitution, $\phi$
  becomes
  \[\exists x_1: 0\geq 0 \wedge x_1-3\leq 0 \wedge x_1\neq 0,\]
  and one quantifier has been removed. Continuing the process will give
  $\Phi\equiv\mathtt{true}$ via the test point $(3,\frac{1}{3})$.
\end{example}

In the example, the root expression has to be substituted into all polynomial
constraints, but it is also necessary to ensure that the substitution term is
valid. Here, this is achieved by adding a constraint to the formula to prevent
division by zero. Such additional constraints are called \textit{guards} of the
substitution term.  Also, substitution in the above example generates a
(quantified) Boolean combination of polynomial constraints; this is not always
the case.  Indeed substitution can lead for instance to rational functions.  In
virtual substitution, this problem is circumvented by a more sophisticated
substitution process.

Assume that after the substitution the resulting formula contains a relation of
the form $p/q \bowtie 0$ with $p$ and $q$ coprime polynomials in $\set
R[x_1,\dots,x_k]$. In order to remove the denominator, we can multiply the
relation by $q$. We do not know, however, if in the subsequent substitution
steps we derive values for $x_1,\dots,x_k$ such that $q$ would evaluate to a
strictly positive or negative number and thus whether the relation sign
$\bowtie$ changes or not.  Note that guards prevent $q$ to be zero.  A way out
is to multiply by $q^2$ (which is certainly positive) rather than $q$.

\begin{example}
In the formula 
  \[\exists x_1\exists x_2:  x_1x_2-1\geq 0\mathrel{\wedge}
  x_2 + x_1-3\leq 0.\] 
we substitute $x_2$ by $x_1^{-1}$ via virtual substitution and obtain the
equivalent formula
  \[\exists x_1: x_1 + x_1^2(x_1-3)\leq 0\mathrel{\wedge} x_1\neq 0.\]
\end{example}

In the full VS algorithm, several other substitution rules are necessary to
avoid non-polynomial expressions. These are detailed in~\cite{weispfenning} for
virtual substitution for polynomials of degree at most two.  Also included are
rules that allow strict inequalities by substitution of $\epsilon$-terms.  In
theory, the method can be extended to an arbitrary but fixed degree bound,
see~\cite{weispfenning3}, but there are still obstacles to overcome for higher
degree implementations.

Virtual substitution performs significantly better in theory and practice
compared to CAD. As shown in~\cite{weispfenning4}, VS is double exponential in
the number of quantifier alternations but only single exponential in the number
of quantified variables for a fixed quantifier type. Since the input in the SMT
setting does not contain quantifier alternations, virtual substitution is
significantly better compared to cylindrical algebraic decomposition for these
formulas in terms of theoretical complexity.

\section{Finding Conflict Sets}
\label{sec:ic}

In order to benefit from the interplay between SAT-solvers and special theory
solvers, it is required from the theory solver to provide small conflict sets.
The input to the theory solver is a conjunction of literals and if this
conjunction is not satisfiable, an answer in the form of a (hopefully small)
subset of the input literals that is unsatisfiable itself should be returned.
We call this answer a conflict set.  Such a conflict set should ideally be as
small as possible.  A minimum conflict set is a conflict set with minimum size,
whereas a minimal conflict set does not contain unnecessary literals, that is,
all its subsets are satisfiable.  A minimum conflict set is minimal, but a
minimal conflict set might not have the smallest size.  The procedure here is
not guaranteed to produce minimum or even minimal conflict sets, but we will
show in Section~\ref{sec:exp} that it is efficient at finding small conflict
sets.  We now describe how virtual substitution and cylindrical algebraic
decomposition can be adapted to provide such answers.

\subsection{Conflict Sets  and Linear Programming}
\label{sec:ic:lp}

The problem can be stated as follows: given an unsatisfiable quantified
formula~$\phi$ of the form
\begin{equation}
\label{eq1}
\phi=\exists x_1\dots\exists x_n:\bigwedge_{1\leq i\leq
  m}p_i\mathrel{\bowtie_i} 0,
\end{equation}
with $p_i\in\set R[x_1,\dots,x_n]$ and $\bowtie_i\ \in\{<, \leq, =, \neq, >,
\geq\}$, find a subset $I\subset\{1,\dots,m\}$ as small as possible such that
the formula
\[\phi'=\exists x_1\dots\exists x_n:\bigwedge_{i\in I}p_i\bowtie_i 0,\]
is unsatisfiable.\goodbreak

As was stated in the beginning of Section~\ref{sec:qe}, virtual substitution and
cylindrical algebraic decomposition share the same basic idea of finding a
finite set~$T$ of test points that suffice to determine the unsatisfiability of
$\phi$. The key to the problem of finding a conflict set is a reformulation of
the problem in terms of these test points.  For that, denote by $r_i$ the $i$th
polynomial constraint in $\phi$ for $i\in\{0,\dots,m\}$ and for each $i$ let
$e_i:T\rightarrow \{0,1\}$ be such that $e_i(a)=0$ if $r_i$ holds at $a$ and $1$
otherwise.  Applying CAD or VS to $\phi$ will result in $T=\{t_1,\dots,t_k\}$
such that for each $t\in T$ there exists an $i$ with $e_i(t)=1$.  Now let $v_i$
be the vector $(e_i(t_1),e_i(t_2),\dots,e_i(t_k))$.  Then the problem of finding
the smallest conflict set can be restated as a linear optimization
problem.\footnote{Alternatively, since $e_i(t_k)$ is either 0 or 1 for each $i$
  and $k$, the problem can be recast into propositional logic, and reduces then
  to finding the smallest implicant of a set of clauses, that is, the smallest
  set of literals implying all clauses.}  Considering a vector $w\in\{0,1\}^m$,
it is indeed equivalent to minimizing $w_1 + \dots + w_m$ under the linear
constraints \[Mw\geq \mathbf{1},\]
where $M$ is the matrix that contains the $v_i$ as columns and
$\mathbf{1}=(1,\dots,1)$. We will refer to matrices $M$ constructed in this way
as \textit{evaluation matrices}. If the vector~$w$ is as desired, then an entry
$1$ at the $i$th position means that $r_i$ is part of the computed conflict set.

Note that our reformulation yields a 0-1-linear integer programming
problem of the form
\begin{equation}
\label{eq2}
\min_{bw}\{w\in\set\{0,1\}^m\mid Mw\geq \mathbf{1}\},\text{ with
}b=\mathbf{1}=(1,\dots,1),M\in\{0,1\}^{k\times m},
\end{equation}
and we can use highly optimized linear programming techniques to find
an optimal or approximate solution.

This is only one of the benefits that the reformulation provides
us. Another one is that the information necessary to construct the
matrix $M$, i.e. the test points and images under the evaluation
functions $e_i$, is already computed during the quantifier
elimination. We will further investigate this fact in the next section. 

We can easily deduce that solving the linear optimization problem is not harder
than solving the original minimum conflict set problem:

\begin{theorem}
  Let $\mathcal{A}$ be an algorithm that solves the problem of finding a minimum
  conflict set. Then there exists a polynomial time algorithm $\mathcal{B}$ that
  transforms a matrix with entries in $\{0,1\}$ into a system of polynomials
  such that $\mathcal{A}\circ\mathcal{B}$ is an algorithm for solving linear
  optimization problems of the form~\eqref{eq2} 
\end{theorem}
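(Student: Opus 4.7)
The plan is to construct $\mathcal{B}$ as a direct combinatorial encoding: given $M\in\{0,1\}^{k\times m}$, produce $m$ univariate polynomial equations whose minimum unsatisfiable subset coincides with the optimum of~\eqref{eq2}. Fix $k$ pairwise distinct reals $t_1,\dots,t_k$ (say $t_i=i$) and, for each column $j$, define
\[
p_j(x) \;=\; \prod_{i\,:\,M_{ij}=0} (x-t_i) \;\in\; \set R[x],
\]
with the convention that the empty product equals~$1$. Let $r_j$ be the constraint $p_j(x)=0$, and set $\mathcal{B}(M)=\exists x\colon \bigwedge_{j=1}^{m} r_j$. The construction performs $O(km)$ arithmetic operations and is clearly polynomial in the size of $M$.

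The core claim to verify is that, for every $I\subseteq\{1,\dots,m\}$, the formula $\bigwedge_{j\in I} r_j$ is unsatisfiable over $\set R$ if and only if the indicator vector $w^{I}\in\{0,1\}^m$ satisfies $Mw^{I}\geq \mathbf{1}$. Indeed, any common real root $x^{*}$ of $\{p_j\}_{j\in I}$ must equal some $t_i$ with $M_{ij}=0$ for every $j\in I$, i.e., the $i$th row of $M$ is \emph{not} covered by $I$; conversely, any uncovered row $i$ furnishes the common root $x^{*}=t_i$. Columns of $M$ that are all-ones collapse to the empty product and yield the trivially false constraint $1=0$, which correctly models the fact that such a column alone already covers every row. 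Hence the minimum conflict set produced by $\mathcal{A}(\mathcal{B}(M))$ is exactly a minimum-cardinality feasible solution to~\eqref{eq2}.

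One small proviso is that the conflict-set problem implicitly assumes that the full conjunction is itself unsatisfiable, which corresponds to~\eqref{eq2} being feasible (no all-zero row of $M$); this condition is testable in linear time, so $\mathcal{B}$ may short-circuit and report infeasibility directly if it fails. I do not foresee a genuine obstacle in the argument: the only point requiring a moment of care is to check that no \emph{spurious} real solutions arise outside the designated set $\{t_1,\dots,t_k\}$, and this is immediate because the real roots of each $p_j$ are, by construction, exactly the $t_i$ with $M_{ij}=0$. Everything else is a routine combinatorial translation, and polynomiality of $\mathcal{B}$ follows from the linear size of each $p_j$ in its expanded or factored form.
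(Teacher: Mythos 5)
Your construction is the same as the paper's: the paper defines $p_i=\prod_{j=1}^{k}(x-j)^{1-M(j,i)}$, which is exactly your $\prod_{i:M_{ij}=0}(x-t_i)$ with $t_i=i$ up to swapping the roles of the index letters, and your covering argument is the verification the paper leaves to the reader. The proposal is correct and follows essentially the same route, with slightly more care about the degenerate cases (all-ones columns, all-zero rows).
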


\begin{proof}
  For a given matrix $M\in\{0,1\}^{k\times m}$, we show how to construct an
  equivalent conflict set problem in polynomial time, i.e. a formula
  $\phi$ whose minimum conflict set immediately yields a solution to the linear
  programming problem~\eqref{eq2}. Let $\phi$ be the quantified formula given by
  \[\phi=\exists x: \bigwedge_{i\in \{1,\dots,m\}}p_i=0,\]
  with
  \[p_i=\prod_{j=1}^{k}(x-j)^{1-M(j,i)}.\]
  One can easily check that the indices of the constraints in any minimum
  conflict set give rise to a solution of the linear programming
  problem. Multiplication of polynomials can be done in polynomial time, which
  proves the claim.\qed
\end{proof}

\subsection{Conflict Sets and Quantifier Elimination Optimization}

One of the main reasons why CAD and VS perform reasonably fast in practice is
that since their initial development, many improvements have been made to speed
up the computation.  For CAD, many of these improvements take the form of
specialized projection operators that reduce the number of cells that are
constructed in the projection phase for certain kinds of input. Another major
contribution was the development of \textit{partial cylindrical algebraic
  decomposition} by Collins and Hong in~\cite{hong}.  In the case of virtual
substitution, many improvements focus on the simplification of the quantifier
free formula after every substitution step. Most notably, this includes the work
of Sturm and Dolzmann in~\cite{dolzmann,sturm}.

While some of the improvements do not have an effect on the computation of
conflict sets as presented in Section~\ref{sec:ic:lp}, others will reduce the
amount of available information for the evaluation matrix. There are basically
two scenarios for information loss, which we describe with the help of two
showcase improvements for CAD and VS.

In the partial CAD method, the following rule is used to avoid unnecessary cell
construction. Note that we do not state it in full generality but adapt the rule
to our framework.

\begin{quote}
  Let $\phi$ be of the form~\eqref{eq1} with polynomials in $\set
  R[x_1,\dots,x_n]$. If $p\in\set R[x_1,\dots,x_k]$ appears in $\phi$
  with $k<n$ and there is a cell $C$ in the CAD of $\set R^{k}$
  induced by the polynomials in $\phi$ in which one of the constraints
  depending only on $p$ evaluates to $\mathtt{false}$, then the cells
  above $C$ do not have to be constructed.
\end{quote}

Assume $(a_1,\dots,a_k)\in\set R^k$ lies in such a cell with a
constraint containing~$p_i$ evaluating to $\mathtt{false}$ and further
assume we compute the CAD without the aforementioned rule. This means
that in the evaluation matrix we get $\ell$ rows corresponding to test
points $(a_1,\dots,a_k,*,\dots,*)$ with $\ell\geq 1$ and all entries
of the $i$th column are equal to 1 at the positions of these rows.
On the other hand, if we compute the partial CAD, these rows will be
missing in the evaluation matrix. However, we can add one row that
corresponds to the test point $(a_1,\dots.a_k)$ and we know that it
will contain at least one non-zero entry at position $i$. At
positions that correspond to polynomial constraints in more than the
first $k$ variables we insert the value $0$.
With this strategy, we can compensate for missing rows in the
evaluation matrix. It is important to note that in this setting, we do
not necessarily get a minimal conflict set even if we look for an
optimal solution in~\eqref{eq2}.

A second reason for missing information can be found in the
simplification strategies used in virtual substitution. If these
strategies can determine at some point in the computation that the
current quantifier-free formula (obtained for instance after some
substitution steps) is a tautology or a contradiction, the remaining
variables will not be substituted in the current substitution
branch. An example for such a situation is a formula of the form
\[x_k\geq0\wedge\dots\wedge x_k<0\wedge\dots\]
which is obviously a contradiction and instead of continuing the substitution
process, one can return $\mathtt{false}$ for this substitution branch.
This scenario is similar to the one before in that an unknown number of
rows in the evaluation matrix is missing. In contrast to the partial
CAD improvement however, the truth value of the substitution branch is
derived not from a single constraint but from a subset of the
constraints in the formula.

In order to preserve compatibility with the conflict set computation, we
therefore require that the simplification mechanism itself is able to determine
a \textit{local} conflict set, i.e.\ a conflict set of the quantifier-free
formula on which the simplification mechanism acts. We then can extend this to a
\textit{global} conflict set. The global conflict set should contain the union
of all the local conflict sets and the corresponding columns can be removed from
the evaluation matrix, together with all rows where these columns have non-zero
entries.

\section{Finding Conflict Sets via Redlog}
\label{sec:exp}

We implemented our method in the package Redlog, part of the open source
computer algebra system Reduce~\cite{reduce}.  We have adapted the available CAD
and VS implementations as well as parts of the simplification facilities for
quantifier-free formulas to explicitly provide the test point evaluations and
local conflict sets. Our method is such that it requires only little changes to
the highly optimized Redlog code. In other methods, see e.g.~\cite{smtrat}, the
implementations of CAD and VS are built from the ground up for use in SMT
solving.

To provide a reasonably large and meaningful test set, we used the
quantifier-free real arithmetic (QF\_NRA) benchmarks from the SMT-LIB library.
Our method expects a set of literals as input, so we use the veriT SMT-solver to
generate, for each SMT-LIB benchmark, one complete assignment of atoms in the
formula.  This assignment is satisfiable in the theory of real linear arithmetic
considering multiplication as an uninterpreted predicate.  This set is further
simplified using a preprocessor (which would eventually also have to be
considered in the conflict clause production).  This preprocessor only does
trivial rewriting.  Since Redlog is a generic tool and is not tuned for SMT-LIB
like formulas, it greatly benefits from this simple cleaning phase.  Finally,
among the obtained formulas, some are satisfiable, and are not considered here.
The test set thus obtained contains 6076 formulas that are proved unsatisfiable
by Redlog.  Figure~\ref{fig:exp1} provides an idea of the size of formulas: a
point $(x, y)$ on the curve means that there are $x$ formulas with a size
smaller than $y$.  The benchmarks as well as a distribution of Redlog featuring
conflict set computation can be obtained on
\url{http://www.loria.fr/~pdobal/}.\footnote{7947 formulas are provided,
  including the ones with a satisfiable or unknown status.}  All our experiments
use a 600 seconds timeout on a computer with an Intel i7-4600U CPU at
2.10GHz and 16 GB of RAM running Linux.

\begin{figure}
  \centering
    \includegraphics[scale=.7]{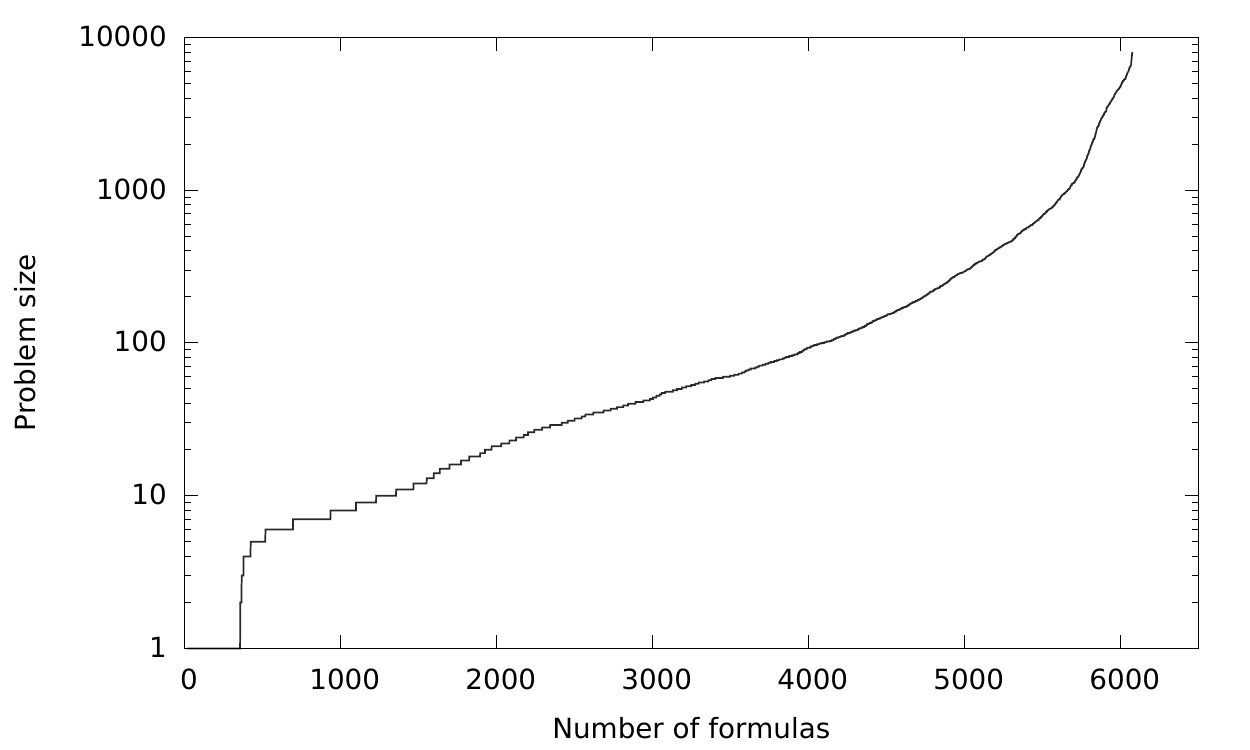}
  \caption{\label{fig:exp1}Problem size (in number of constraints) repartition.}
\end{figure}

\begin{figure}
  \centering
    \includegraphics[scale=.7]{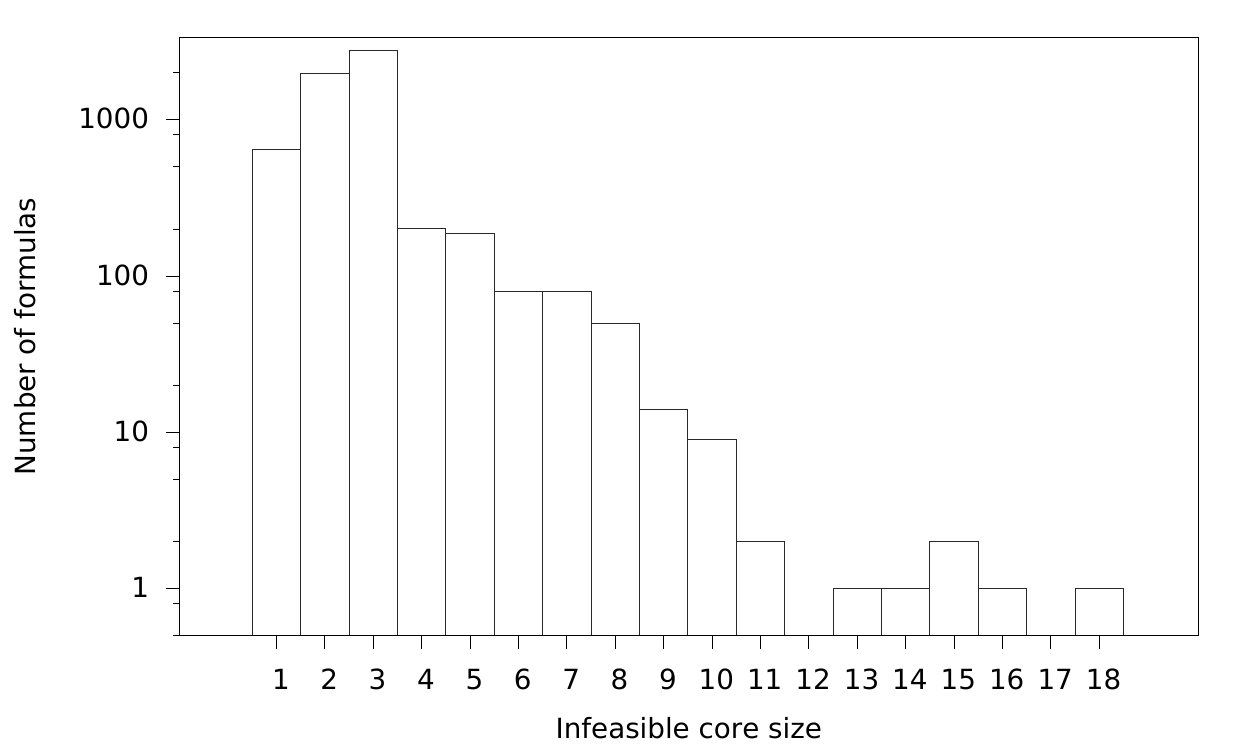}

    \caption{\label{fig:exp2}Number of formulas for a given conflict set size
      (in number of constraints).}
\end{figure}

\begin{figure}
  \centering
    \includegraphics[scale=.7]{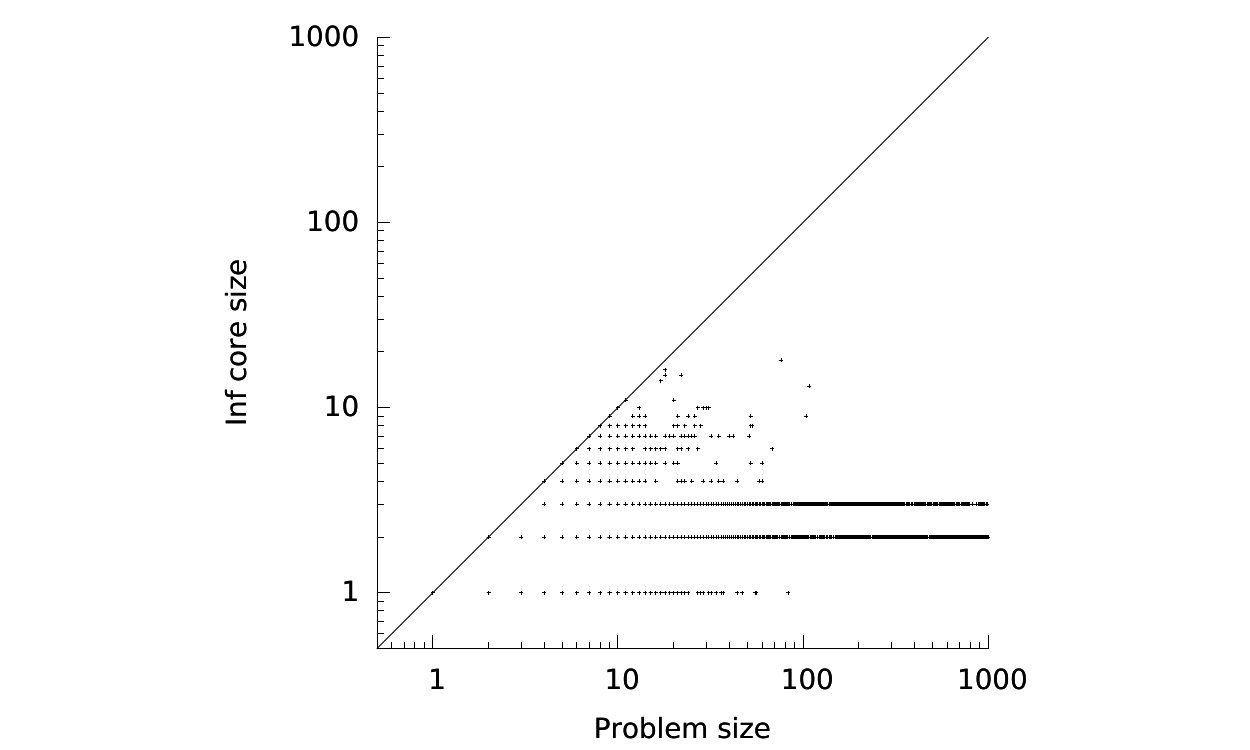}
    \caption{\label{fig:exp3}Size of formulas vs.\ size of conflict sets (in
      number of constraints).}
\end{figure}

The scatter plot on Figure~\ref{fig:exp3} gives a comparative view of the
problem and conflict set sizes, whereas Figure~\ref{fig:exp2} provides the
distribution of the conflict set sizes: the method is suitable to provide small
conflict sets.  Even if most inputs contain tens or hundreds of constraints,
just a few conflict sets have more than ten constraints.  Semiautomatic
inspection of the conflict sets exhibits that some of these are not minimal,
i.e.\ they contain literals that are not necessary for unsatisfiability.  For
integration within SMT, it will be necessary to evaluate whether it is more
efficient to reduce the conflict set size using other techniques or to keep
these perfectible conflict sets as they are.

\begin{figure}
  \centering
    \includegraphics[scale=.7]{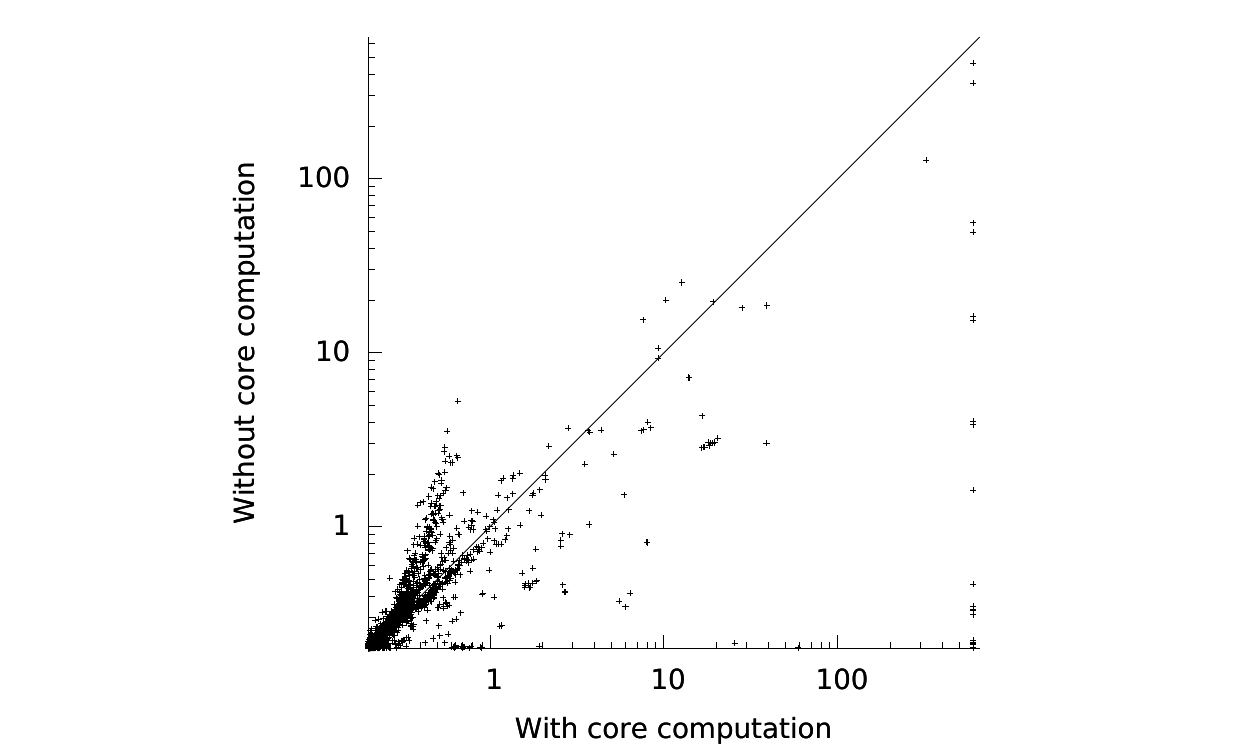}
  \caption{\label{fig:exp4}Computing time (in seconds) with and without conflict
    set generation.}
\end{figure}

Figure~\ref{fig:exp4} provides a comparative graph of the running times of
Redlog with and without conflict set generation.  Conflict set generation is not
exactly the non-conflict set producing algorithm with an additional phase: some
features of the original (non-conflict set producing) algorithm have to be
turned off.  This explains most of the cost, as well as the fact that sometimes
the conflict set  generating algorithm is faster (just because the search tree is
different). However the results clearly show that conflict set computation has an
acceptable cost; it fails only for 22 out of 6076 cases.

As a side note, Redlog was also evaluated against Z3 on all these benchmarks.
Redlog is definitely slower on most of them, also because there is a 0.2 seconds
cost for starting the whole Reduce infrastructure, whereas Z3 most of the
time answers in a few hundreds of a second.  It also appears that Z3 is
extremely effective for satisfiable files, being able to decide the
satisfiability of 24 files more than Redlog, whereas no file was stated
satisfiable by Redlog and not by Z3.  On the unsatisfiable problems, Redlog
succeeded on 2 among the 9 for which Z3 failed, whereas Redlog failed on 18
problems proved unsatisfiable by Z3.  This is an indication that further work to
present the SMT assignments to Redlog in a better way could lead to good results
when using Redlog as a back-end.

\section{Conclusion}

We introduced here a technique to adapt two commonly used real quantifier
elimination methods, that is, cylindrical algebraic decomposition and virtual
substitution, to also provide, besides the satisfiability status of a set of
polynomial constraints on the reals, a conflict set when the input set is
unsatisfiable.  This technique is based on the simple, yet effective,
observation that both methods amount to checking the values of the constraints
on a finite number of test points.  Collecting the test points and the values is
sufficient to compute the conflict sets in a post-processing phase, which is
basically a linear optimization problem, or the computation of a (prime)
implicant for a set of clauses.  Experimental results show that this technique
performs very well to produce small conflict sets.

Quantifier elimination methods also come with their lot of heuristics, and these
are not all seamlessly compatible with our technique.  Here, some of those
heuristics were turned off, and some were adapted to tag the constraints used by
the heuristics as mandatory for the conflict set.  This is responsible for
non-minimality of the produced conflict sets.  Although we can observe
experimentally that the produced conflict sets are small, it will certainly be
beneficial to better analyze the heuristics for finer conflict set production.

In their applications, SMT solvers are used to check large and mostly easy
computer generated formulas, whereas Redlog was mostly conceived for hard
problems of moderate size.  In order to succeed the integration of Redlog as a
complete back-end for non-linear constraints within SMT, it is necessary to
improve the heuristic simplification preprocessing phase, which is currently
extremely basic.  Another non-trivial issue is to take into account this
preprocessing phase for the conflict computation.\newline

\noindent\textbf{Acknowledgements.} We would like to thank the reviewers for
their valuable suggestions and comments on this paper. Furthermore, the
expertise of Thomas Sturm and Marek Ko\v{s}ta on Redlog was of much benefit to
the authors.

\bibliographystyle{abbrv}
\bibliography{references}

\begin{thebibliography}{10}

\bibitem{Barrett14}
C.~Barrett, R.~Sebastiani, S.~A. Seshia, and C.~Tinelli.
\newblock Satisfiability modulo theories.
\newblock In A.~Biere, M.~J.~H. Heule, H.~van Maaren, and T.~Walsh, editors,
  {\em Handbook of Satisfiability}, volume 185 of {\em Frontiers in Artificial
  Intelligence and Applications}, chapter~26, pages 825--885. IOS Press, Feb.
  2009.

\bibitem{Barrett5}
C.~W. Barrett.
\newblock {\em Checking validity of quantifier-free formulas in combinations of
  first-order theories}.
\newblock PhD thesis, Stanford University, 2003.

\bibitem{Barsotti2}
D.~Barsotti, L.~P. Nieto, and A.~Tiu.
\newblock Verification of clock synchronization algorithms: experiments on a
  combination of deductive tools.
\newblock {\em Form. Asp. Comput.}, 19(3):321--341, 2007.

\bibitem{brown2}
C.~W. Brown.
\newblock Qepcad b: A program for computing with semi-algebraic sets using
  cads.
\newblock {\em SIGSAM BULLETIN}, 37:97--108, 2003.

\bibitem{brown}
C.~W. Brown and J.~H. Davenport.
\newblock The complexity of quantifier elimination and cylindrical algebraic
  decomposition.
\newblock In {\em Proceedings of the 2007 International Symposium on Symbolic
  and Algebraic Computation}, ISSAC '07, pages 54--60, New York, NY, USA, 2007.
  ACM.

\bibitem{marekbrown}
C.~W. Brown and M.~Kosta.
\newblock Constructing a single cell in cylindrical algebraic decomposition.
\newblock {\em J. Symb. Comput.}, 70:14--48, 2015.

\bibitem{chen}
C.~Chen and M.~Moreno~Maza.
\newblock An incremental algorithm for computing cylindrical algebraic
  decompositions.
\newblock In R.~Feng, W.-s. Lee, and Y.~Sato, editors, {\em Computer
  Mathematics}, pages 199--221. Springer Berlin Heidelberg, 2014.

\bibitem{collins}
G.~E. Collins.
\newblock Quantifier elimination for real closed fields by cylindrical
  algebraic decomposition--preliminary report.
\newblock {\em SIGSAM Bull.}, 8(3):80--90, Aug. 1974.

\bibitem{hong}
G.~E. Collins and H.~Hong.
\newblock Partial cylindrical algebraic decomposition for quantifier
  elimination.
\newblock {\em Journal of Symbolic Computation}, 12(3):299 -- 328, 1991.

\bibitem{smtrat}
F.~Corzilius, U.~Loup, S.~Junges, and E.~Abraham.
\newblock {SMT-RAT}: An {SMT}-compliant nonlinear real arithmetic toolbox (tool
  presentation).
\newblock In {\em Int. Conf. on Theory and Applications of Satisfiability
  Testing (SAT’12).}, volume 7317 of {\em Lecture Notes in Computer Science},
  pages 442 -- 448. Springer-Verlag, 2011.

\bibitem{davenport}
J.~H. Davenport and J.~Heintz.
\newblock Real quantifier elimination is doubly exponential.
\newblock {\em Journal of Symbolic Computation}, 5(1):29--35, 1988.

\bibitem{Moura5}
L.~M. de~Moura and N.~Bj{\o}rner.
\newblock Model-based theory combination.
\newblock {\em Electronic Notes in Theoretical Computer Science},
  198(2):37--49, 2008.

\bibitem{Caminha4}
D.~C.~B. de~Oliveira, D.~D{\'e}harbe, and P.~Fontaine.
\newblock Combining decision procedures by (model-)equality propagation.
\newblock {\em Science of Computer Programming}, 77(4):518--532, 2012.

\bibitem{Detlefs2}
D.~Detlefs, G.~Nelson, and J.~B. Saxe.
\newblock Simplify: {A} theorem prover for program checking.
\newblock Technical Report HPL-2003-148, Hewlett Packard Laboratories, July~23
  2003.

\bibitem{dolzmann}
A.~Dolzmann.
\newblock {\em Algorithmic strategies for applicable real quantifier
  elimination}.
\newblock PhD thesis, Universität Passau, Innstrasse 29, 94032 Passau, 2000.

\bibitem{dolzmann2}
A.~Dolzmann and T.~Sturm.
\newblock Redlog: Computer algebra meets computer logic.
\newblock {\em SIGSAM Bull.}, 31(2):2--9, June 1997.

\bibitem{sturm}
A.~Dolzmann and T.~Sturm.
\newblock Simplification of quantifier-free formulae over ordered fields.
\newblock {\em Journal of Symbolic Computation}, 24(2):209 -- 231, 1997.

\bibitem{reduce}
A.~C. Hearn and R.~Sch{\"o}pf.
\newblock {\em {Reduce User's Manual, Free Version}}, October 2014.

\bibitem{Jovanovic3}
D.~Jovanović and L.~de~Moura.
\newblock Solving non-linear arithmetic.
\newblock In B.~Gramlich, D.~Miller, and U.~Sattler, editors, {\em
  International Joint Conference on Automated Reasoning (IJCAR)}, volume 7364
  of {\em Lecture Notes in Computer Science}, pages 339--354. Springer Berlin
  Heidelberg, 2012.

\bibitem{King1}
T.~King, C.~Barrett, and C.~Tinelli.
\newblock Leveraging linear and mixed integer programming for {SMT}.
\newblock In K.~Claessen and V.~Kuncak, editors, {\em Formal Methods In
  Computer-Aided Design (FMCAD)}, pages 24:139--24:146, Austin, TX, Oct. 2014.
  FMCAD Inc.

\bibitem{Kosta1}
M.~Kosta, T.~Sturm, and A.~Dolzmann.
\newblock Better answers to real questions.
\newblock {\em CoRR}, abs/1501.05098, 2015.

\bibitem{Nelson3}
G.~Nelson and D.~C. Oppen.
\newblock Simplifications by cooperating decision procedures.
\newblock {\em ACM Transactions on Programming Languages and Systems},
  1(2):245--257, Oct. 1979.

\bibitem{tarski}
A.~Tarski.
\newblock {\em A decision method for elementary algebra and geometry}.
\newblock Rand report. Rand Corporation, 1948.
\newblock Republished as A Decision Method for Elementary Algebra and Geometry,
  2nd ed. Berkeley, CA: University of California Press, 1951.

\bibitem{weispfenning2}
V.~Weispfenning.
\newblock The complexity of linear problems in fields.
\newblock {\em Journal of Symbolic Computation}, 5(1–2):3 -- 27, 1988.

\bibitem{weispfenning}
V.~Weispfenning.
\newblock A new approach to quantifier elimination for real algebra.
\newblock Technical Report MIP-9305, FMI, Universit{\"a}t Passau, Germany, July
  1993.

\bibitem{weispfenning4}
V.~Weispfenning.
\newblock Quantifier elimination for real algebra -- the quadratic case and
  beyond.
\newblock {\em AAECC}, 8:85--101, 1993.

\bibitem{weispfenning3}
V.~Weispfenning.
\newblock Quantifier elimination for real algebra -- the cubic case.
\newblock In {\em Proceedings of the International Symposium on Symbolic and
  Algebraic Computation}, ISSAC '94, pages 258--263, New York, NY, USA, 1994.
  ACM.

\end{thebibliography}
\end{document}